\newtheorem{theorem}{Theorem}
\theoremstyle{definition}
\newtheorem{definition}{Definition}[section]
\pgfplotsset{width=10cm,compat=1.9}
\DeclareMathOperator*{\argmin}{argmin}
\DeclareMathOperator*{\mmax}{max}
\newcommand*\samethanks[1][\value{footnote}]{\footnotemark[#1]}
\title{The Impact of Differential Privacy on Group Disparity Mitigation}
\author{
Victor Petr\'en Bach Hansen \textsuperscript{\rm 1 \rm 2}\thanks{\;Authors contributed equally},
Atula Tejaswi Neerkaje\textsuperscript{\rm 3 \rm 4}\samethanks ,
Ramit Sawhney \textsuperscript{\rm 3}, \\
\textbf{Lucie Flek \textsuperscript{\rm 3},
Anders S\o gaard}\textsuperscript{\rm 1} \\
\textsuperscript{\rm 1}Department of Computer Science, University of Copenhagen, Denmark\\ 
\textsuperscript{\rm 2}Topdanmark A/S, Denmark\\
\textsuperscript{\rm 3}Conversational AI and Social Analytics (CAISA) Lab, University of Marburg, Germany \\
\textsuperscript{\rm 4}Manipal Institute of Technology, India\\
}
\begin{document}
\maketitle
\begin{abstract}
The performance cost of differential privacy has, for some applications, been shown to be higher for minority groups; fairness, conversely, has been shown to disproportionally compromise the privacy of members of such groups. Most work in this area has been restricted to computer vision and risk assessment. In this paper, we evaluate the impact of differential privacy on fairness across four tasks, focusing on how attempts to mitigate privacy violations and between-group performance differences interact: Does privacy inhibit attempts to ensure fairness? To this end, we train $(\varepsilon,\delta)$-differentially private models with empirical risk minimization and group distributionally robust training objectives. Consistent with previous findings, we find that differential privacy increases between-group performance differences in the baseline setting; but more interestingly, differential privacy {\em reduces} between-group performance differences in the robust setting. We explain this by reinterpreting differential privacy as regularization. 
\end{abstract}

\section{Introduction}

Classification tasks in computer vision and natural language processing face the challenge of
balancing performance with the need to prevent discrimination against protected demographic subgroups, satisfying fairness principles. In some tasks, we train our classifiers on private data and therefore also need our models to satisfy privacy guarantees. 


Privacy-preserving algorithms, however, may tend to disproportionally affect members of minority classes \citep{farrand2020neither}. E.g., \citet{bagdasaryan2019differential}, show the performance cost of differential privacy \cite{10.1007/11681878_14} in face recognition is higher for minority groups, suggesting that privacy and fairness are fundamentally at odds \citep{chang2021privacy,agarwal21tradeoffs}. In this paper, we evaluate two hypotheses at scale: (a) that the performance cost of differential privacy is unevenly distributed across demographic groups \citep{pmlr-v81-ekstrand18a,cummings2019compatibility,bagdasaryan2019differential,farrand2020neither}, and (b) that such effects can be mitigated by more robust learning objectives \citep{sagawa*2020distributionally,pezeshki2020gradient}. 

\paragraph{Contributions} We build upon previous work suggesting that differential privacy and fairness are at odds: Differential privacy may often hurt minority groups the most, and reducing the fairness gap by focusing on minority groups during training typically puts their privacy at risk. We evaluate this hypothesis at scale by measuring the impact of differential privacy in terms of fairness across (1) a baseline empirical risk minimization and (2) under a group distributionally robust optimization. We conduct our experiments across four tasks of different modalities, assuming the group membership information is available at training time, but not at test time: face recognition (CelebA), topic classification, volatility forecasting based on earning calls, and sentiment analysis of product reviews. Our results confirm that differential privacy compromises fairness in the baseline setting;
however, we demonstrate that 
differential privacy not only mitigates the decrease but also {\em improves } fairness compared to non-private experiments for 4/5 datasets in the distributionally robust setting. We explain this by reinterpreting differential privacy as an approximation of Gaussian noise injection, which is equivalent to strategies previously shown to determine the efficacy of group-robust learning.




\section{Fairness and Privacy}

Fair machine learning aims to ensure that induced models do not discriminate
against individuals with specific values in their protected
attributes (e.g., race, gender). We represent each data point
as $z = (x, g, y) \in \mathcal{X} \times \mathcal{G} \times \mathcal{Y}$, with $g\in \mathcal{G}$ encoding its protected attribute(s).\footnote{In practice our protected attributes in \S3 will be {\em age} and {\em gender}. Both are protected under the Equality Act 2010.} Let $\mathcal{D}^g_y$ denote the distribution of data with protected attribute $g$ and label $y$. 

Several definitions of group fairness exist in the literature \citep{pmlr-v97-williamson19a}, but here we focus on a generalization of approximately constant conditional (equalized) risk \citep{donini18empirical}:\footnote{In the fairness literature, approximate fairness is referred to as $\delta$-fairness, but below we will use lower case $\delta$ to refer to $(\varepsilon,\delta)$-differential privacy, and we refer to $\Delta$-fairness to avoid confusion.} 

\begin{definition}[$\Delta$-Fairness] Let $\ell^{g_i}(\theta)=\mathbb{E}[\ell(\theta(x), y)|g=g_i]$ be the risk of the samples in
the group defined by $g_i$, and $\Delta\in [0, 1]$. We say that a model $\theta$ is $\Delta$-fair if for any two values of $g$, say $g_i$ and $g_j$,  $|\ell^{g_i}(\theta)-\ell^{g_j}(\theta)|<\Delta$.
\end{definition}

Note that if $\ell$ coincides with the performance metric of a task, and $\delta=0$, this is identical to performance or classification parity \citep{yuan2021assessing}.\footnote{Performance or classification parity has been argued to suffer from statistical limitations in \cite{corbettdavies2018measure}, which remind us that when risk distributions
differ, standard error metrics are poor proxies of individual equity. This is known as the problem of infra-marginality. Note, however, that this argument does not apply to binary classification problems.} Such a notion of fairness can
be derived from John Rawls’ theory on
distributive justice and stability, treating model performance as a resource to be allocated. Rawls' {\em difference principle}, maximizing the welfare
of the worst-off group, 
is argued to lead to stability and mobility in society at large \citep{rawls_theory_1971}. $\Delta$ directly measures what is sometimes called Rawlsian {\em min-max fairness} \citep{10.1287/opre.1100.0865}.  In our experiments, we measure $\Delta$-fairness as the absolute difference between performance of the worst-off and best-off subgroups.

Recall the standard definition of  $(\varepsilon,\delta)$-privacy: 

\begin{definition} $\theta$ is $(\varepsilon,\delta)$-private iff 
    $\mbox{Pr}[\theta(\mathcal{X})]\leq \exp(\varepsilon)\times  \mbox{Pr}[\theta(\mathcal{X}')]+\delta$ 
for any two distributions, $\mathcal{X}$ and $\mathcal{X'}$, different at most in one row.\end{definition}

Differential privacy thereby ensures that an algorithm will generate similar outputs
on similar data sets. Note the multiplicative bound $\exp(\varepsilon)$ and the additive bound $\delta$ serve different roles: The $\delta$ term represents the possibility that a few data points are not governed by the multiplicative bound, which controls the level of privacy (rather than its scope). Note that it also follows directly that if $\varepsilon=0$ and $\delta=0$, absolute privacy is required, leading $\theta$ to be independent of the data. 

Several authors have shown that differential privacy comes at different costs for minority subgroups \citep{pmlr-v81-ekstrand18a,cummings2019compatibility,bagdasaryan2019differential,farrand2020neither}. The more private the model is required to be, the larger group disparities it may exhibit\footnote{Note this is a different trade-off than the fairness-privacy trade-off which results from the need for collecting sensitive data to learn fair models; the latter is discussed at length in \citet{veale2017fairer}.}. This happens because differential privacy distributes noise where it is needed to reduce the influence of individual examples. Since outlier examples are likely to have disproportional influence on output distributions \citep{10.2307/2347160,Chernick1983TheUO}, they are also disproportionally affected by noise injection in differential privacy. 

\citet{agarwal21tradeoffs} show that, in fact, a $(\varepsilon,0)$-private and fully fair model -- using equalized odds as the definition of fairness -- will be unable to learn anything. To see this, remember that a fully private model is independent of the data and unable to learn from correlations between input and output. If $\theta$ is, in addition, required to be fair, it is thereby required to be fair for all distributions, which prevents $\theta$ from encoding any prior beliefs about the output distribution. Note this finding generalizes straight-forwardly to equalized risk, and even to approximate fairness (since even for finite distributions, we can define a $\Delta>0$, such that preserving absolute privacy would lead to a constant $\theta$). 

\begin{theorem}
For sufficiently small values of $\Delta$, a fully $(\varepsilon,0)$-private model $\theta$ that is also $\Delta$-fair,  will have trivial performance. 
\end{theorem}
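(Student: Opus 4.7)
The plan is to reduce the claim to the impossibility of \citet{agarwal21tradeoffs} for exactly fair and $(\varepsilon,0)$-private models under equalized odds, and then bridge two gaps: from equalized odds to equalized risk, and from exact fairness to approximate $\Delta$-fairness for sufficiently small $\Delta$.

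For the reduction itself, I would first reiterate the mechanism sketched in the paragraph preceding the theorem. An $(\varepsilon,0)$-private model behaves nearly identically on any two neighboring datasets under a purely multiplicative bound, so once the model is further required to equalize a group-conditional statistic on every such dataset, it cannot respond differentially to any input--output correlation that is not already symmetric across groups. Agarwal et al.\ formalize this into the statement that the only $\theta$ satisfying both constraints are those that are independent of the data, and these have no better than constant-predictor performance by construction.

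Next I would note that the Agarwal argument is purely structural: it uses only that the fairness constraint is an equality between group-conditional functionals of $\theta$'s output distribution. Our $\Delta$-fairness from Definition~2.1 has precisely that form, with $\ell^{g_i}(\theta)=\mathbb{E}[\ell(\theta(x),y)\mid g=g_i]$ in place of the group-conditional false positive and negative rates of equalized odds. Substituting the functional therefore yields a proof of the $\Delta=0$ case verbatim. To promote this to $\Delta>0$, I would invoke a discreteness argument: for finite data distributions---or, more generally, any setting in which the hypothesis class and support are finite---the achievable values of $\ell^{g_i}(\theta)$ lie in a discrete subset of $[0,1]$, so there exists a minimum nonzero gap $\Delta_0>0$ between any two such values across groups. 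For any $\Delta<\Delta_0$ the $\Delta$-fairness constraint collapses onto exact fairness, reducing the problem to the previous step and furnishing the ``sufficiently small'' clause of the theorem.

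The main obstacle I anticipate is quantifier hygiene: the Agarwal impossibility implicitly demands that fairness hold on every distribution, whereas a practitioner may only wish fairness to hold on a single training distribution. Fortunately, the privacy definition itself rescues us, since $(\varepsilon,0)$-privacy already quantifies over all pairs of neighboring datasets, forcing the adversarial distributions needed by the argument to be in scope. A secondary subtlety is that $\theta$ may be randomized, in which case the achievable-risks set is convex rather than discrete; I would handle this by noting that any mixture of data-independent predictors is itself data-independent, so it suffices to apply the discreteness step to the deterministic extreme points.
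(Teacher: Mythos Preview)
Your proposal is correct and takes essentially the same approach as the paper: the paper's proof is the single line ``This follows directly from the above,'' where ``the above'' is the paragraph invoking \citet{agarwal21tradeoffs}, remarking that the argument transfers from equalized odds to equalized risk, and observing that for finite distributions one can pick $\Delta>0$ small enough to collapse approximate fairness to exact fairness. You have simply unpacked that paragraph into an explicit reduction, and your additional care about quantifier scope and randomized $\theta$ goes beyond what the paper supplies.
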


\begin{proof}
This follows directly from the above. 
\end{proof}

While we do not strictly require an absolute privacy in our experiments (setting $\delta=10^{-5}$), intuitively, privacy potentially compromises fairness by adding more noise to data points of minority group members than to those of majority groups. Fairness, on the other hand, leads to over-sampling or over-attending to data points of minority group members, more likely compromising their privacy. 

\citet{pannekoek2021investigating} show, however, that it {\em is} possible to learn {\em somewhat private} and {\em somewhat fair} classifiers. They combine differential privacy with reject option classification. Their results nevertheless suggest that privacy and fairness objectives are fundamentally at odds, as fairness decreases with the introduction of differential privacy. 

\section{Experiments}
\begin{figure*}[t]
    \centering
    \includegraphics[width=\textwidth]{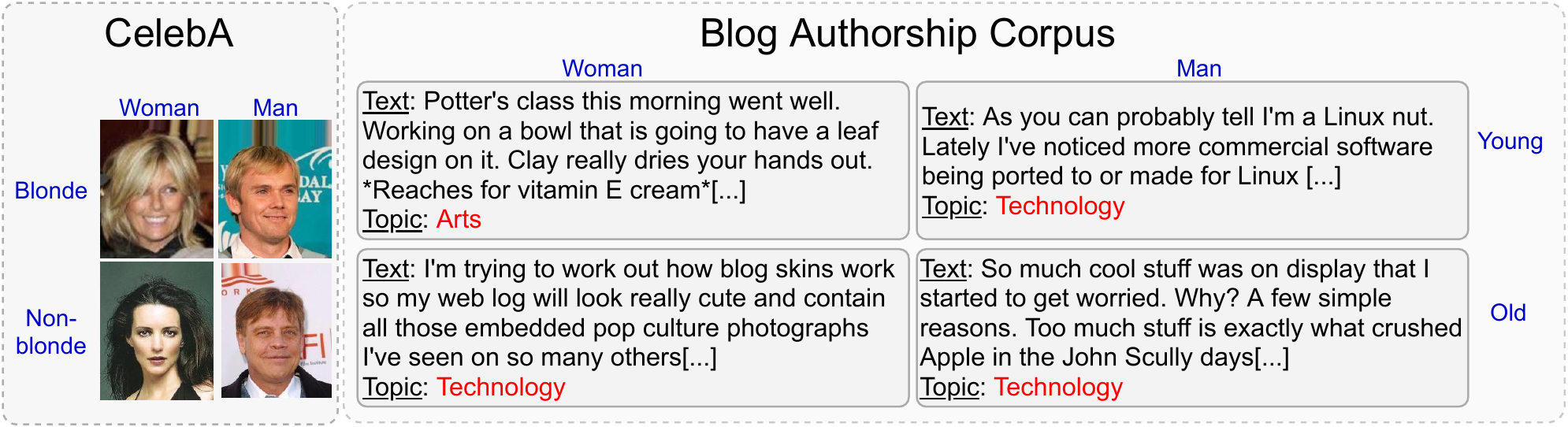}
    \caption{Examples of the different subgroups that appear in a subset of the datasets we train on. CelebA (left) contains images of celebrities, using hair-color as our target variable and gender as our protected attribute.  Blog Authorship Corpus (right) contains text-based blogposts on two topics \{Technology, Arts\} our targets, using $\mathcal{G}:\{\mbox{Man}, \mbox{Woman}\}\times \{\mbox{Young}, \mbox{Old}\}$ as our protected subgroups.}
    \label{fig:data}
\end{figure*}

This section describes the algorithms and datasets involved in our experiments, and presents the results of these. 
\subsection{Algorithms}


\paragraph{Empirical Risk Minimization} For a model parameterized by $\theta$, in our baseline Empirical Risk Minimization (ERM) setting, we minimize the expected loss $\mathbb{E}[\ell(\theta(x), y)]$ with data $(x,g,y)\in\mathcal{X}\times \mathcal{G}\times \mathcal{Y}$ drawn from a dataset $\mathcal{D}$:
\begin{equation}
    \hat{\theta}_{ERM}= \argmin_{\theta} \mathbb{E}_{\hat{\mathcal{D}}}[\ell(\theta(x), y)]    
\end{equation}
Here $\hat{\mathcal{D}}$ denotes the empirical training distribution. Note that we disregard any group information in our data. In an overparameterized setting, ERM is prone to overfitting spurious correlations, which are more likely to hurt performance on minority groups \citep{sagawa2020investigation}. 

\paragraph{Distributionally Robust Optimization} Several authors have suggested 
to mitigate the effects of such overfitting by 
explicitly optimizing for out-of-distribution mixtures of sub-populations \citep{hu18robust,orenSHL19,sagawa*2020distributionally}. In this work we focus on Group-aware Distributionally Robust Optimization (Group DRO)  \citep{sagawa*2020distributionally}. 

Under the assumption that the training distribution $\mathcal{D}$ is a mixture of a discrete number of groups, $\mathcal{D}_g$ for $g\in \mathcal{G}$, we define the worst-case loss as the maximum of the group-specific expected losses:
\begin{equation}
    \ell(\theta)_{worst}=\mmax_{g\in \mathcal{G}} \mathbb{E}_{\hat{\mathcal{D}_g}}[\ell(\theta(x), y)]
\end{equation}
In Group DRO -- in contrast with ERM -- we exploit our knowledge of the group membership of data points ($x,g,y$). The overall objective is for minimizing the empirical worst-case loss is therefore: 
\begin{equation}
    \hat{\theta}_{DRO}=\argmin_{\theta} \Big[ \ell(\hat{\theta})_{worst}:=\mmax_{g \in G} \mathbb{E}_{\hat{\mathcal{D}}_g}[\ell(\theta(x), y)] \Big]
\end{equation}
Note, again, that the knowledge of group membership $g$ is only available at training time, not at test time. Unlike \citet{sagawa*2020distributionally}, we do not employ heavy $\ell_2$ regularization during our experiments, but rather use it with the same parameters as proposed in \citet{koh2021wilds}.



\paragraph{Differentially Private Stochastic Gradient Descent (DP-SGD)} We implement differential privacy \citep{10.1007/11681878_14} using DP-SGD, as presented in \citet{abadi2016deep}. DP-SGD limits the influence of training samples by (i) clipping the per-batch gradient where its norm exceeds a pre-determined clipping bound $C$, and by (ii) adding Gaussian noise $\mathcal{N}$ characterized by a noise scale $\sigma$ to the aggregated per-sample gradients. We control this influence with a privacy budget $\varepsilon$, where lower values for $\varepsilon$ indicates a more strict level of privacy. DP-SGD has remained popular, among other things because it generalizes to iterative
training procedures \citep{48293}, and supports tighter bounds using the Rényi method \citep{8049725}. 

Differential privacy generally comes at a performance cost, leading to privacy-preserving models performing worse compared to their non-private counterparts \citep{alvim2011differential}. However, we follow \citet{Kerrigan2020DifferentiallyPL} and {\em finetune} the private models, which are first pretrained (without differential privacy) on a large public dataset. This protocol generally seems to provide a better trade-off between accuracy and privacy \citep{Kerrigan2020DifferentiallyPL}, leading to better-performing, yet private models. The only exception to this setup is the volatility forecasting task, where our models were trained from scratch, as those rely on PRAAT audio features.

\begin{table*}[t]
    \centering
    \scriptsize
    \begin{tabular}{clcccccccc}
        \toprule 
        &&\multicolumn{8}{c}{Performance at $\varepsilon$-Privacy}\\
        && \multicolumn{2}{c}{No DP} & \multicolumn{2}{c}{$\varepsilon_1$} & \multicolumn{2}{c}{$\varepsilon_2$} & \multicolumn{2}{c}{$\varepsilon_3$} \\
        && Score & $\varepsilon$ & Score & $\varepsilon$ & Score & $\varepsilon$ & Score & $\varepsilon$ \\
               \midrule 
        \multirow{2}{*}{\rotatebox[origin=c]{90}{\scriptsize \sc Celeb}}&
        {\sc ERM}&$0.954 \pm 0.000$&-&$0.943 \pm 0.001$&$9.50$&$0.940 \pm 0.002$&$5.17$&$0.932 \pm 0.001$&$0.99$\\
        \cmidrule{2-10}&
        {\sc DRO}&$0.953 \pm 0.001$&-&$0.899 \pm 0.006$&$9.50$&$0.891 \pm 0.014$&$5.17$&$0.873 \pm 0.007$& $0.99$
        \\
             \midrule 
        \multirow{3}{*}{\rotatebox[origin=c]{90}{\scriptsize  \sc Blog}}&
        {\sc ERM}&$0.699 \pm 0.002$&-&$0.661 \pm 0.003$&$9.25$&$0.661 \pm 0.003$&$5.03$&$0.648 \pm 0.005$&$1.02$\\
        \cmidrule{2-10}&
        {\sc DRO}&$0.692 \pm 0.001$&-&$0.651 \pm 0.001$&$9.25$&$0.650 \pm 0.005$&$5.03$&$0.630 \pm 0.003$&$1.02$
        \\
        \midrule 
        \multirow{2}{*}{\rotatebox[origin=c]{90}{\sc \scriptsize Vol.}}&
        {\sc ERM}&$0.756 \pm 0.036$&-&$0.778 \pm 0.073$&$9.32$&$0.794 \pm 0.046$&$6.42$&$0.778 \pm 0.039$&$0.96$\\
        \cmidrule{2-10}&
        {\sc DRO}&$0.814 \pm 0.061$&-&$0.798 \pm 0.042$&$9.32$&$0.815 \pm 0.056$&$6.42$&$0.833 \pm 0.093$&$0.96$
        \\
        \midrule 
        \multirow{2}{*}{\rotatebox[origin=c]{90}{\scriptsize \sc T-UK}}&
        {\sc ERM}&$0.933 \pm 0.008$&-&$0.919 \pm 0.002$&$9.39$&$0.916 \pm 0.001$&$4.94$&$0.889 \pm 0.009$&$1.02$\\
        \cmidrule{2-10}&
        {\sc DRO}&$0.931 \pm 0.004$&-&$0.893 \pm 0.006$&$9.39$&$0.873 \pm 0.015$&$4.94$&$0.820 \pm 0.015$&$1.02$
        \\
        \midrule 
        \multirow{2}{*}{\rotatebox[origin=c]{90}{\scriptsize \sc T-US}}&
        {\sc ERM}&$0.894 \pm 0.007$&-&$0.817 \pm 0.014$&$10.71$&$0.812 \pm 0.009$&$5.10$&$0.666 \pm 0.019$&$1.01$\\
        \cmidrule{2-10}&
        {\sc DRO}&$0.899 \pm 0.009$&-&$0.569 \pm 0.132$&$10.71$&$0.437 \pm 0.112$&$5.10$&$0.342 \pm 0.012$&$1.01$
        \\

        \bottomrule 
    \end{tabular}
    
    \begin{tabular}{clcccccccc}
        \toprule 
        &&\multicolumn{8}{c}{Group-disparity at $\varepsilon$-Privacy}\\
        && \multicolumn{2}{c}{No DP} & \multicolumn{2}{c}{$\varepsilon_1$} & \multicolumn{2}{c}{$\varepsilon_2$} & \multicolumn{2}{c}{$\varepsilon_3$} \\
        && GD & $\varepsilon$ & GD & $\varepsilon$ & GD & $\varepsilon$ & GD & $\varepsilon$ \\
        \midrule 
        \multirow{2}{*}{\rotatebox[origin=c]{90}{\scriptsize \sc Celeb}}&
        {\sc ERM}&$0.556 \pm 0.021$&-&$0.746 \pm 0.032$&$9.50$&$0.734 \pm 0.025$&$5.17$&$0.770 \pm 0.013$&$0.99$\\
        \cmidrule{2-10}&
        {\sc DRO}&$0.514 \pm 0.042$&-&$0.039 \pm 0.018$&$9.50$&$0.080 \pm 0.031$&$5.17$&$\mathbf{0.056} \pm 0.027$&$0.99$
        \\
        \midrule 
        \multirow{3}{*}{\rotatebox[origin=c]{90}{\scriptsize  \sc Blog}}&
        {\sc ERM}&$0.108 \pm 0.013$&-&$0.149 \pm 0.006$&$9.25$&$0.140 \pm 0.004$&$5.17$&$0.136 \pm 0.011$&$0.99$\\
        \cmidrule{2-10}&
        {\sc DRO}&$0.078 \pm 0.009$&-&$0.056 \pm 0.020$&$9.25$&$0.070 \pm 0.013$&$5.17$&$\mathbf{0.077} \pm 0.027$&$0.99$
        \\
        \midrule 
        \multirow{2}{*}{\rotatebox[origin=c]{90}{\sc \scriptsize Vol.}}&
        {\sc ERM}&$0.302 \pm 0.042$&-&$0.328 \pm 0.067$&$9.32$&$0.557 \pm 0.050$&$6.42$&$0.573 \pm 0.050$&$0.96$\\
        \cmidrule{2-10}&
        {\sc DRO}&$0.221 \pm 0.062$&-&$0.320 \pm 0.085$&$9.32$&$0.371 \pm 0.058$&$6.42$&$0.421 \pm 0.083$&$0.96$
        \\
        \midrule 
        \multirow{2}{*}{\rotatebox[origin=c]{90}{\scriptsize \sc T-UK.}}&
        {\sc ERM}&$0.018 \pm 0.005$&-&$0.022 \pm 0.006$&$9.39$&$0.020 \pm 0.014$&$4.94$&$0.037 \pm 0.006$&$1.02$\\
        \cmidrule{2-10}&
        {\sc DRO}&$0.030 \pm 0.008$&-&$0.030 \pm 0.004$&$9.39$&$0.039 \pm 0.023$&$4.94$&$\mathbf{0.025} \pm 0.010$&$1.02$
        \\
        \midrule 
        \multirow{2}{*}{\rotatebox[origin=c]{90}{\scriptsize \sc T-US}}&
        {\sc ERM}&$0.055 \pm 0.006$&-&$0.048 \pm 0.019$&$10.71$&$0.054 \pm 0.015$&$5.10$&$0.109 \pm 0.017$&$1.01$\\
        \cmidrule{2-10}&
        {\sc DRO}&$0.036 \pm 0.007$&-&$0.118 \pm 0.040$&$10.71$&$0.078 \pm 0.030$&$5.10$&$\mathbf{0.021} \pm 0.030$&$1.01$
        \\

        \bottomrule 
    \end{tabular}
    \caption{Performance (top) and $\Delta$-Fairness (bottom) of ERM and Group DRO across different degrees of differential privacy ($\varepsilon$). $\varepsilon_1$, $\varepsilon_2$ and $\varepsilon_3$ corresponds to $\varepsilon$-values of roughly 10, 5 and 1 respectively (see table for exact values). We report F1 scores for sentiment and topic classification, accuracy for face recognition and MSE for volatility forecasting. Group disparity (GD) is measured by the absolute difference between the best and worst performing sub-group ($\Delta$-Fairness; see Definition~2.1). The performance and corresponding uncertainties are based on several individual runs of each configuration, see \S\ref{sec:train_details} in the Appendix for further details. Differential privacy consistently hurts fairness for ERM. For Group DRO, we {\bf bold-face} numbers where strict differential privacy ($\varepsilon_3$) {\em increases} fairness; this happens in 4/5 datasets. We see large increases for face recognition and small increases for topic classification and sentiment analysis.}
    \label{tab:main_res}
\end{table*}

\subsection{Tasks and architectures}
To study the impact of differential privacy on fairness, in ERM and Group DRO, we evaluate increasing levels of differential privacy across five datasets that span four tasks and three  different modalities: speech, text and vision.

\paragraph{Facial Attribute Detection} We study facial attribute recognition with the CelebFaces Attributes Dataset (CelebA) \citep{Liu2015face}.
It contains faces of celebrities annotated with attributes, such as hair color, gender and other facial features. Following \citet{sagawa*2020distributionally}, we use the hair color as our target variable, with gender being the demographic attribute (see Figure \ref{fig:data} (left)). The dataset contains $\sim 163K$ datapoints, where the smallest group (blond males) only counts $1387$.
We finetune a publicly pretrained ResNet50, a standard model for image classification tasks, on the CelebA dataset and evaluate model performances as accuracies over 3 individual seeds.

\paragraph{Topic Classification} For topic classification, we use the Blog Authorship Corpus \citep{Schler2006EffectsOA}.
The Blog Authorship Corpus contains weblogs written on 19 different topics, collected from the Internet before August 2004. The dataset contains \textit{self-reported} demographic information about the gender and age of the authors. The dataset is limited to gender information in the binary form\footnote{Note that all binary gender assumptions in this work are unwillingly inherited from the datasets.}. We binarize age, distinguishing  between young ($=<35$) and older ($>35$) authors,\footnote{Older authors tend to be underrepresented in web data. 
} resulting in four different group combinations (see Figure \ref{fig:data} (right)).
We chose two topics of roughly equal size (Technology and Arts), reducing the topic classification task to a binary classification task.
For our experiments, we finetune a pretrained English DistilBERT model \citep{sanh2019distilbert}. To reduce the overall added computational cost of DP-SGD, we freeze our model, except for the outer-most Transformer encoder layer as well as the classification layer. We report model performances as F1 scores over 3 individual seeds.

\paragraph{Volatility Forecasting} For the stock volatility forecasting task, we use the Earnings Conference Calls dataset by \citet{qin-yang-2019-say}. This consists of 559 public earnings calls audio recordings for 277 companies in the S\&P 500 index, spanning over a year of earnings calls. The \textit{self-reported} genders of the CEOs was scraped by \citet{sawhney-etal-2021-empirical} from Reuters,\footnote{\url{https://www.thomsonreuters.com/en/profiles.html}} Crunchbase,\footnote{\url{https://www.crunchbase.com/discover/people}} and the WikiData API.\footnote{\url{https://query.wikidata.org/}} The extracted genders were found to be in the binary form \cite{sawhney-etal-2021-empirical}, with 12.3\% of speakers being female and 87.7\% of speakers being male, a highly skewed distribution. Since our primary focus with this task is to explore the impact of differential privacy on speech, we use only audio features without the call transcripts. For each audio recording $A$ of a given earning call $E$, the goal is to predict the company's stock volatility as a regression task. Following 
\citet{qin-yang-2019-say}, we calculate the average log volatility $\tau$ days (temporal window) following the day of the earnings call. For each audio clip belonging to a given call, we extract 26-dimensional features with PRAAT \citep{boersma2001speak}. Each audio embedding of the call is fed sequentially to a BiLSTM,  
followed by an attention layer and two fully-connected layers. The model is trained by optimizing the Mean Square Error (MSE) between the predicted and true stock volatility. For all results, we report MSE on the test set for a 70:10:20 temporal split of the data.
The results are averaged over 5 seeds. 

\paragraph{Sentiment Analysis} For our sentiment analysis task, we use the Trustpilot Corpus \citep{hovy2015user}\footnote{\url{https://bitbucket.org/lowlands/release/src/master/WWW2015/data/}}. It consists of text-based user reviews from the Trustpilot website, rating companies and services on a 1 to 5 star scale. The reviews spans 5 different countries; Germany, Denmark, France, United Kingdom and USA, however, we only consider the English reviews, i.e. UK and US. The Trustpilot contains demographic information about the gender, age and geographic location of the users, but as with the topic classification task, we only concern ourselves with the gender and age of the users. 
As with the topic classification task, we finetune DistilBERT on the UK and US English parts of the Trustpilot Corpus, freezing all parameters but the final encoder layer, as well as the classification layer. Classification performance is measured as F1 scores and the results are averaged over 3 seeds.

Our implementation is a PyTorch extension of the WILDS repository\footnote{\url{https://github.com/p-lambda/wilds/}} \citep{koh2021wilds} using the DP-SGD implementation provided by the Opacus Differential Privacy framework\footnote{\url{https://opacus.ai/}}. For further details about data and training, see \S\ref{sec:train_details} in the Appendix. We release the code for our experiments at: \url{https://github.com/vpetren/fair_dp}.

\begin{figure*}[h]
\centering
\begin{subfigure}{.5\textwidth}
    \centering
    \includegraphics[width=\textwidth]{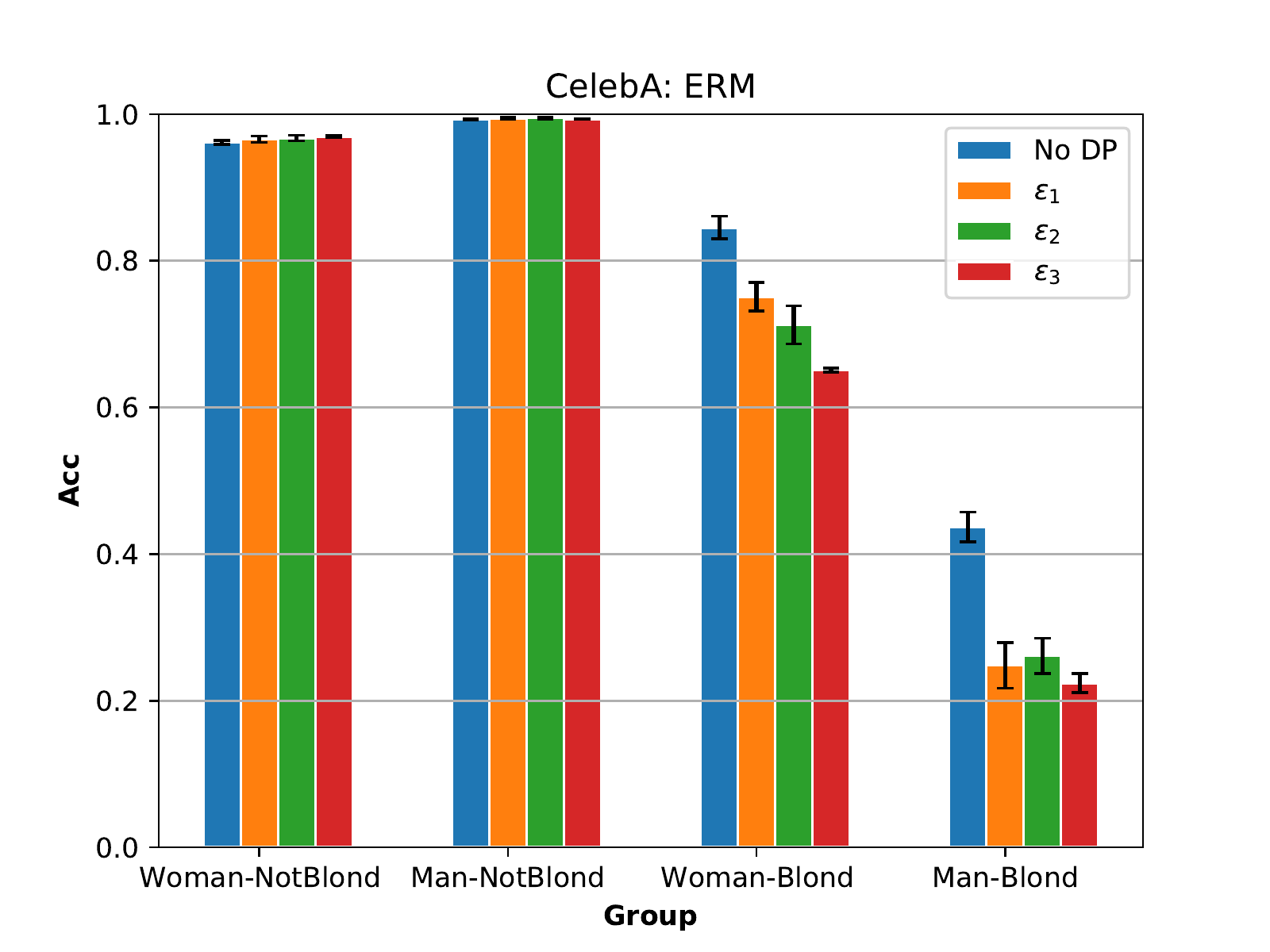}
\end{subfigure}%
\begin{subfigure}{.5\textwidth}
    \centering
    \includegraphics[width=\textwidth]{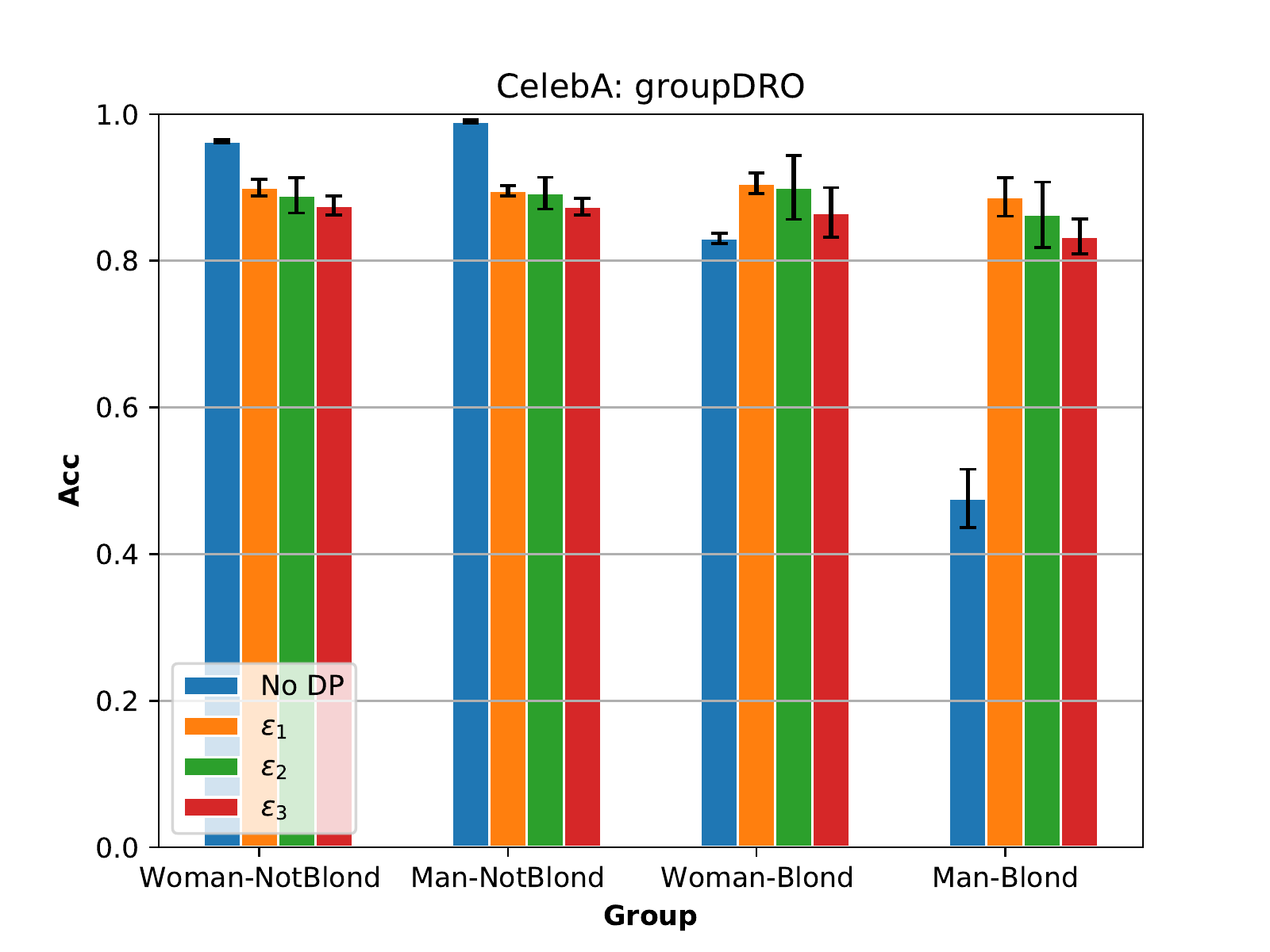}
\end{subfigure}
\caption{{\bf Face Attribute Detection:} Performance of individual groups of increasing levels of $\varepsilon$. Comparing baseline ERM to Group DRO, we find that Group DRO performance on the minority group (blond males) perform much better under privacy constraints; we return to this in \S3.4.}
\label{fig:barplot-face}
\end{figure*}

\begin{figure*}[h]
\centering
\begin{subfigure}{.5\textwidth}
    \centering
    \includegraphics[width=\textwidth]{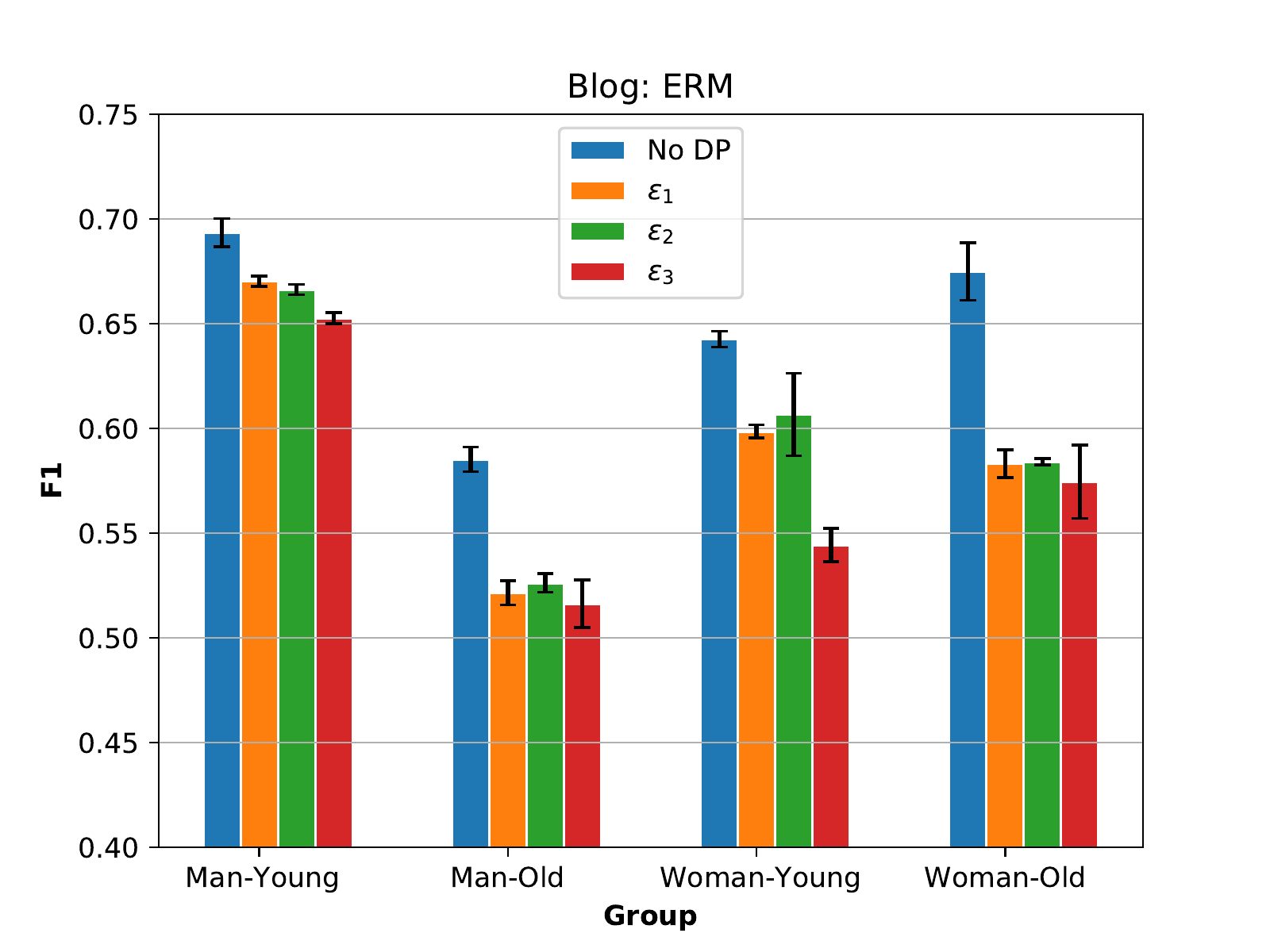}
\end{subfigure}%
\begin{subfigure}{.5\textwidth}
    \centering
    \includegraphics[width=\textwidth]{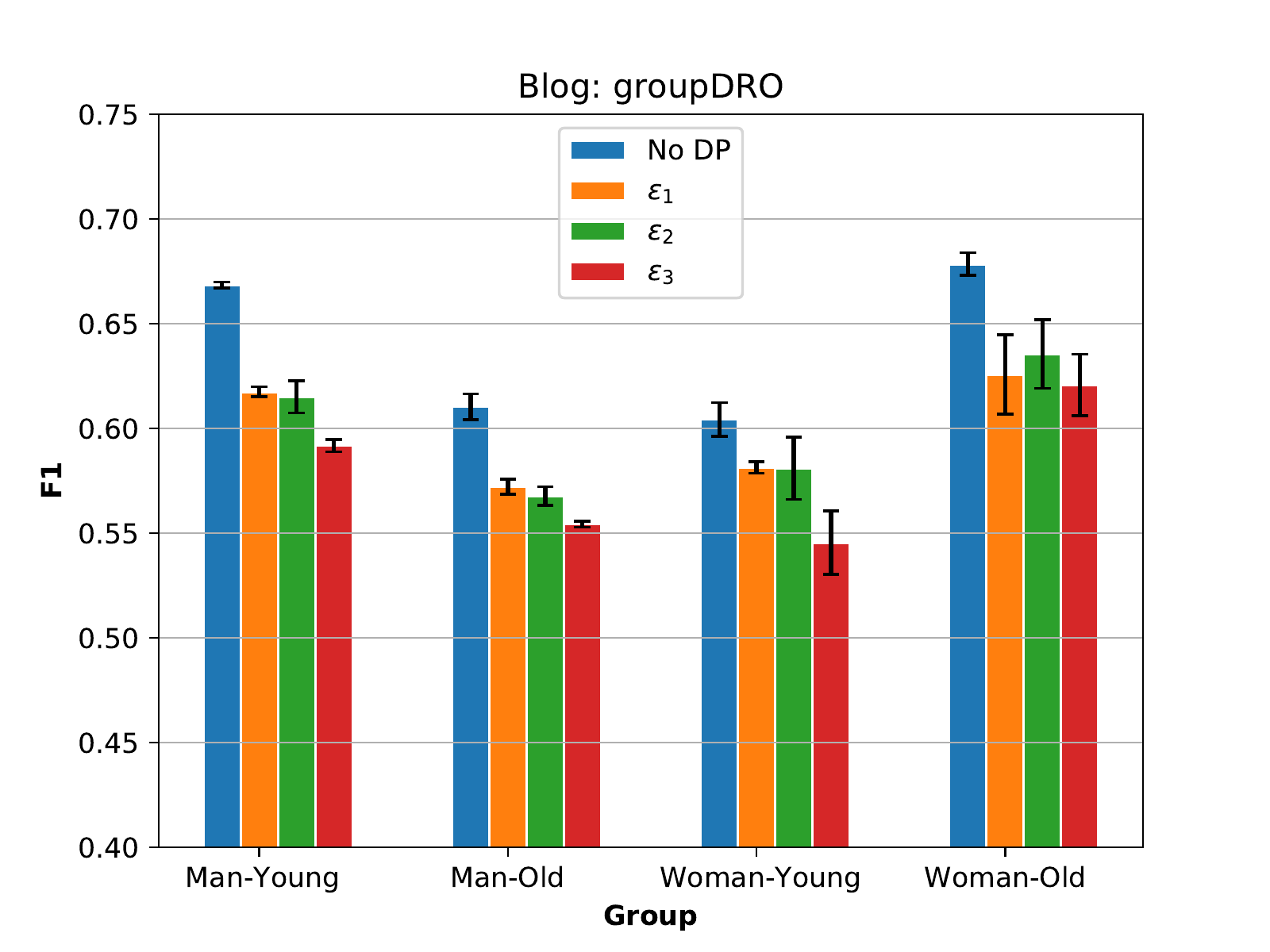}
\end{subfigure}
\caption{{\bf Topic Classification:} Performance of individual groups of increasing levels of $\varepsilon$. 
 Group DRO, compared to baseline ERM, results in a more balanced performance across all groups, even on a low privacy budget. }
\label{fig:barplot-topic}
\end{figure*}

\subsection{Results}

Our results are presented in Table \ref{tab:main_res}. The top half of the table presents standard (average) performance numbers across multiple runs of ERM and Group DRO at different privacy levels. Recall that performance for sentiment analysis as well as topic classification is measured in F1, volatility forecasting is measured in MSE and face recognition is measured in accuracy.  The accuracy of our ERM face attribute detection classifier is 0.954 in the non-private setting, for example. 

Our first observation is that, as hypothesized earlier, differential privacy hurts model performance. For our smallest text-based dataset (\textsc{T-US}), performance becomes very poor at the strictest privacy level. This is however associated with a high amount of variance between seeds, see Figure \ref{fig:group_trust_us} in the Appendix. The above face attribute detection classifier, which had an accuracy of 0.954 in the non-private setting, has a performance of 0.932 at this level. 

\paragraph{Differential privacy hurts fairness in ERM} The effect on differential privacy on fairness (bottom half of Table~\ref{tab:main_res}) is also quite consistent. The gap between the majority group and the minority group (or, more precisely, the best-performing and the worst-performing demographic subgroup) widens with increased privacy. In face recognition, for example, the accuracy gap between the two groups is 0.556 without differential privacy, but 0.770 at the strictest privacy level. 

\begin{figure*}[t]
\centering
\begin{subfigure}{\textwidth}
    \centering
    \hspace{-1.5cm}
    \includegraphics[width=0.55\linewidth,height=5cm]{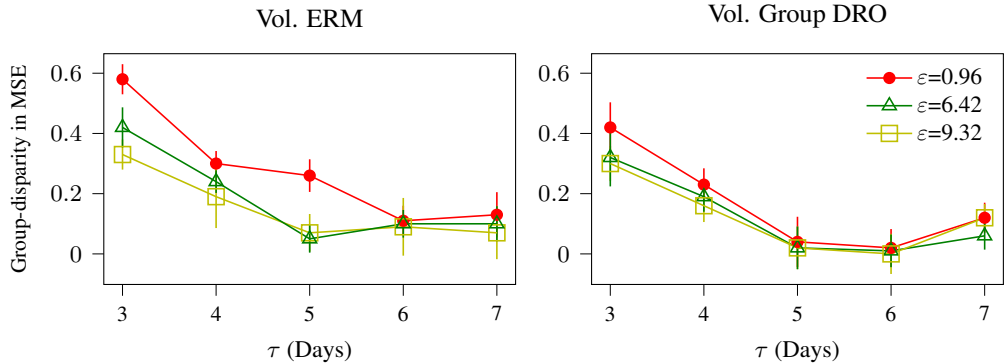}
\end{subfigure}
    \caption{{\bf Volatility Forecasting: }A comparison of group-disparity between subgroups for increasing temporal volatility windows ($\tau$) and privacy budgets ($\varepsilon$), over 5 independent runs.}
    \label{fig:TemporalDisparity}
\end{figure*}

\paragraph{Differential privacy increases fairness in Group DRO} For Group DRO, we see the opposite effect. For 4/5 datasets, we see that differential privacy leads to an increase in fairness. For face recognition, for example, the gap goes from 0.514 in the non-private setting to 0.056 in the strictest, basically disappearing. This is also illustrated in the bar plots in Figure~\ref{fig:barplot-face}. See Figure~\ref{fig:barplot-topic} for similar bar plots of the topic classification results; we include similar plots for other tasks in the Appendix. We do also observe that this increase in privacy can be expensive in terms of overall performance (e.g. Trustpilot-US). Note that the increase in fairness at higher privacy levels is seemingly at odds with previous results suggesting that privacy and fairness conflict, e.g., \citet{agarwal21tradeoffs}. We return to this question in \S3.4.

Note also that the only exception to the latter trend is for volatility forecasting, where differential privacy hurts fairness both in ERM and Group DRO (though Group DRO mitigates the disparity). This speech-based prediction is the only regression task, and the only task for which we do not rely on pretrained models trained on public data.

For this task, we further analyze group disparity for varying temporal windows ($\tau$) used to calculate target volatility values, along with increasingly strict privacy budgets ($\varepsilon$) in Figure \ref{fig:TemporalDisparity}. The disparity between subgroups widens with stricter privacy guarantees \citep{bagdasaryan2019differential}. This gap is significant for lower values of $\tau$, strengthening the hypothesis that short-term volatility forecasting is much harder than long-term \citep{qin-yang-2019-say}, especially for minority classes due to the disproportionate impact of noise. Comparing ERM and Group DRO, we find Group DRO mitigates this disparity gap. We observe disparity reduces with increasing temporal window, since stock prices over a larger time frame are comparatively more stable \citep{qin-yang-2019-say}. As a consequence, the influence of Group DRO for higher $\tau$ ($6,7$) is reduced, despite facilitating faster convergence. Most importantly, we observe the power of Group DRO in mitigating the disparity caused by strict privacy safeguards ($\varepsilon=0.96$) for crucial short term prediction ($\tau=3)$ tasks. 

\subsection{Discussion}

It is well-known that differential privacy comes with a performance cost 
\citep{shokri2015privacy}.\footnote{A multitude of algorithmic improvements have been proposed to mitigate the overall accuracy drop caused by the increased privacy protection -– including private sampling from hyperbolic word representation spaces \citep{feyisetan2019leveraging}, 
Gaussian $f$-differential privacy (Bu et al. 2020), and gradient denoising (Nasr et al., 2020). It is yet to be examined, if the empirical application of such utility preservation techniques affects the disparate impact issue.} However, recent work has additionally shown that differential privacy is at odds with most, if not all, definitions of fairness, including equalized risk \citep{pmlr-v81-ekstrand18a,cummings2019compatibility,bagdasaryan2019differential,farrand2020neither}. Our work makes two important contributions: (a) We evaluate and confirm this hypothesis at a larger scale than previous studies for standard empirical risk minimization; and (b) we point out that the opposite holds true in the context of Group Distributionally Robust Optimization: Here, adding differential privacy improves fairness (equalized risk). 

While (b) at first seems to contradict the very hypothesis that (a) confirms -- namely that privacy is at odds with fairness -- we believe the explanation is quite simple, namely that we are observing two opposite trends (at the same time): On one hand, differential privacy adds disproportionate noise to minority group examples; but on the other hand, it adds Gaussian noise which acts as a regularizer to improve robust optimization. 

In their evaluation of Group Distributionally Robust Optimization, \citet{sagawa*2020distributionally} observe that robustness is only achieved in the context of heavy regularization; specifically, they show fairness improvements when they add $\ell_2$ regularization or early stopping. The $\ell_2$ regularization and early stopping did not increase fairness under ERM, but seemed to 'activate' Group DRO. This makes intuitive sense: Since regularized models cannot perfectly fit the training data, heavily regularized Group DRO sacrifices average performance for worst-case performance and obtain better generalization. In the absence of regularization, however, Group DRO is less effective. 

In our experiments (\S3), we add minimal regularization to Group DRO, following the implementation in \citet{koh2021wilds}, but differential privacy, we argue, provides that additional regularization. To see this, remember that DP-SGD works by Gaussian noise injection. Gaussian noise injection is known to be near-equivalent to $\ell_2$-regularization and early stopping \citep{bishop95training}. DP-SGD simply makes the trade-off more urgent.  


\section{Related Work}

\paragraph{Fair machine learning} Early work on mitigating group-level disparities included oversampling \citep{shen2016relay,guo2004learning} and undersampling \citep{drumnond2003class,barandela2003restricted}, as well as instance weighting \citep{Shimodaira2000}. Other proposals modify existing training algorithms or cost functions to obtain fairness \citep{khan2017cost,chung2015cost}. In the context of large-scale deep neural networks, Group DRO is a particularly interesting approach to mitigating group-level disparities \citep{creager2021environment}.  See \citet{pmlr-v97-williamson19a} and \citet{corbettdavies2018measure} for interesting discussions of how fairness has been measured. More recent alternatives to Group DRO include Invariant Risk Minimization \citep{arjovsky2020invariant}, Spectral Decoupling \citep{pezeshki2020gradient} and Adaptive Risk Minimization \citep{zhang2021adaptive}. We ran experiments with both Invariant Risk Minimization and Spectral Decoupling, but they performed much worse than Group DRO. 

\paragraph{Fairness and privacy} Recent studies suggest that privacy-preserving methods such as differential privacy tend to disproportionately affect minority class samples \citep{pmlr-v81-ekstrand18a,cummings2019compatibility,bagdasaryan2019differential,farrand2020neither}.
\citet{pannekoek2021investigating} show that it {\em is} possible to learn {\em somewhat private} and {\em somewhat fair} classifiers, in their case by combining differential privacy and reject option classification. \citet{jagielski2019differentially} introduced the so-called DP-oracle-learner, derived from an \textit{oracle-efficient} algorithm \citep{agarwal2018reductions}, which satisfies equalized odds, an alternative notion of fairness \citep{pmlr-v97-williamson19a}. \citet{lyu2020democratise} introduced Differentially Private GANs (DPGANs), while \citet{tran2020differentially} utilize Lagrangian duality to integrate fairness constraints to protected attributes. Group DRO has, to the best of our knowledge, not been studied under differential privacy before.

\section{Conclusions} 
In \S2, we summarized previous work suggesting that differential privacy and fairness are at odds. In \S3, we then confirmed this hypothesis at scale, across five datasets, spanning four tasks and three modalities, showing that for Empirical Risk Minimization, stricter levels of privacy consistently {\em hurt} fairness. This holds true even after pretraining on large-scale public datasets \citep{Kerrigan2020DifferentiallyPL}. In the context of Group-aware Distributionally Robust Optimization (Group DRO) \citep{sagawa*2020distributionally}, however, which is designed to mitigate group-level performance disparities (optimizing for equalized risk), we saw the opposite effect: Strict levels of differential privacy were associated with an {\em increase} in fairness. In \S3.4, we discuss how this aligns well with the observation that Group DRO works best in the context of heavy $\ell_2$ regularization, keeping in mind that Gaussian noise injection is near-equivalent to $\ell_2$ regularization \citep{bishop95training}.

\bibliography{anthology,custom}
\bibliographystyle{acl_natbib}

\clearpage
\newpage

\section{Appendix}
\label{sec:appendix}
\subsection{Additional Figures}
This section contains group-specific bar-plots for the performance on individual groups in the Trustpilot Corpus. For barplots on CelebA and Blog Authorship, see Figure \ref{fig:barplot-face} and \ref{fig:barplot-topic}.
\begin{figure*}[h]
\centering
\begin{subfigure}{.5\textwidth}
    \centering
    \includegraphics[width=\textwidth]{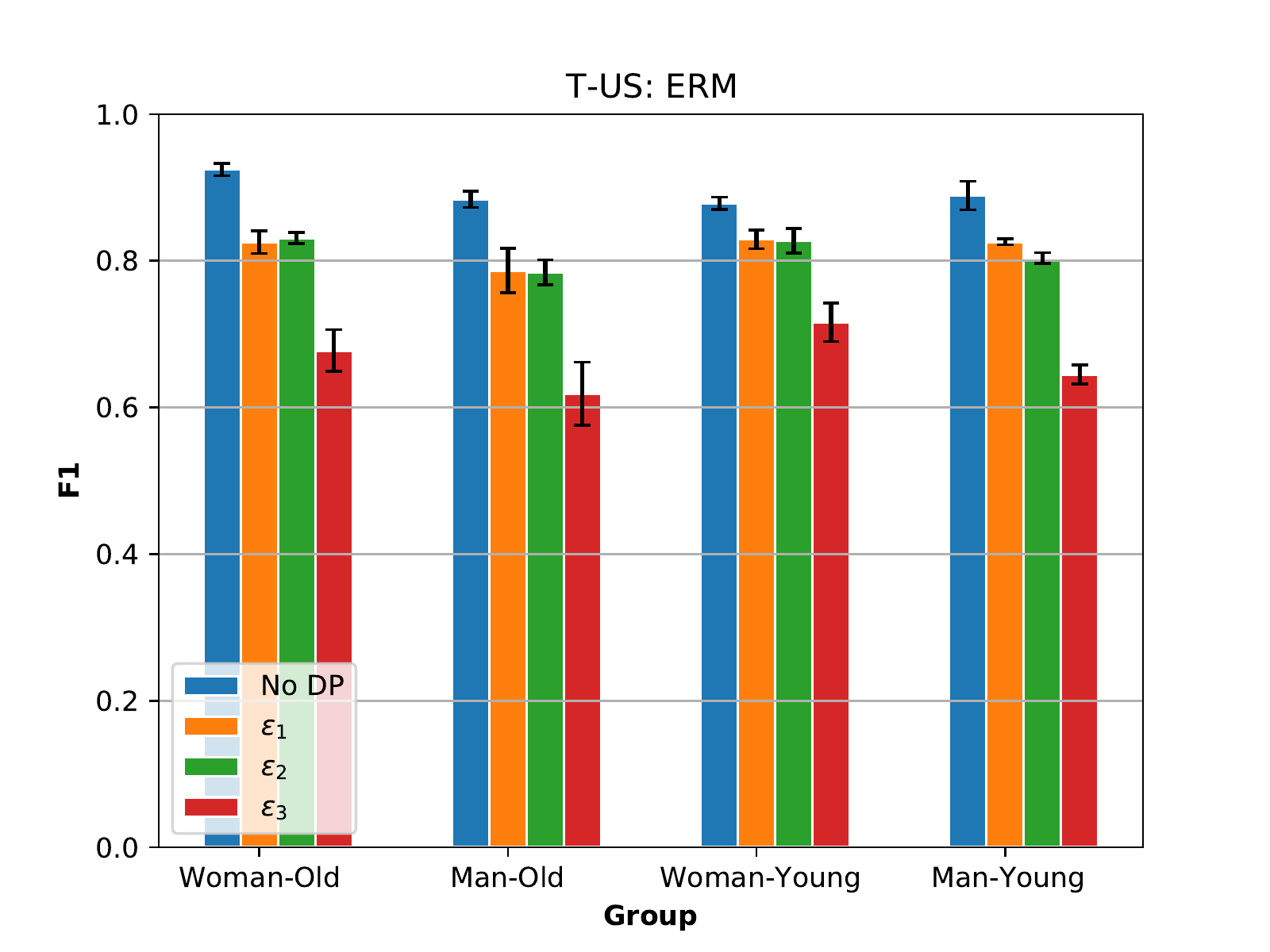}
\end{subfigure}%
\begin{subfigure}{.5\textwidth}
    \centering
    \includegraphics[width=\textwidth]{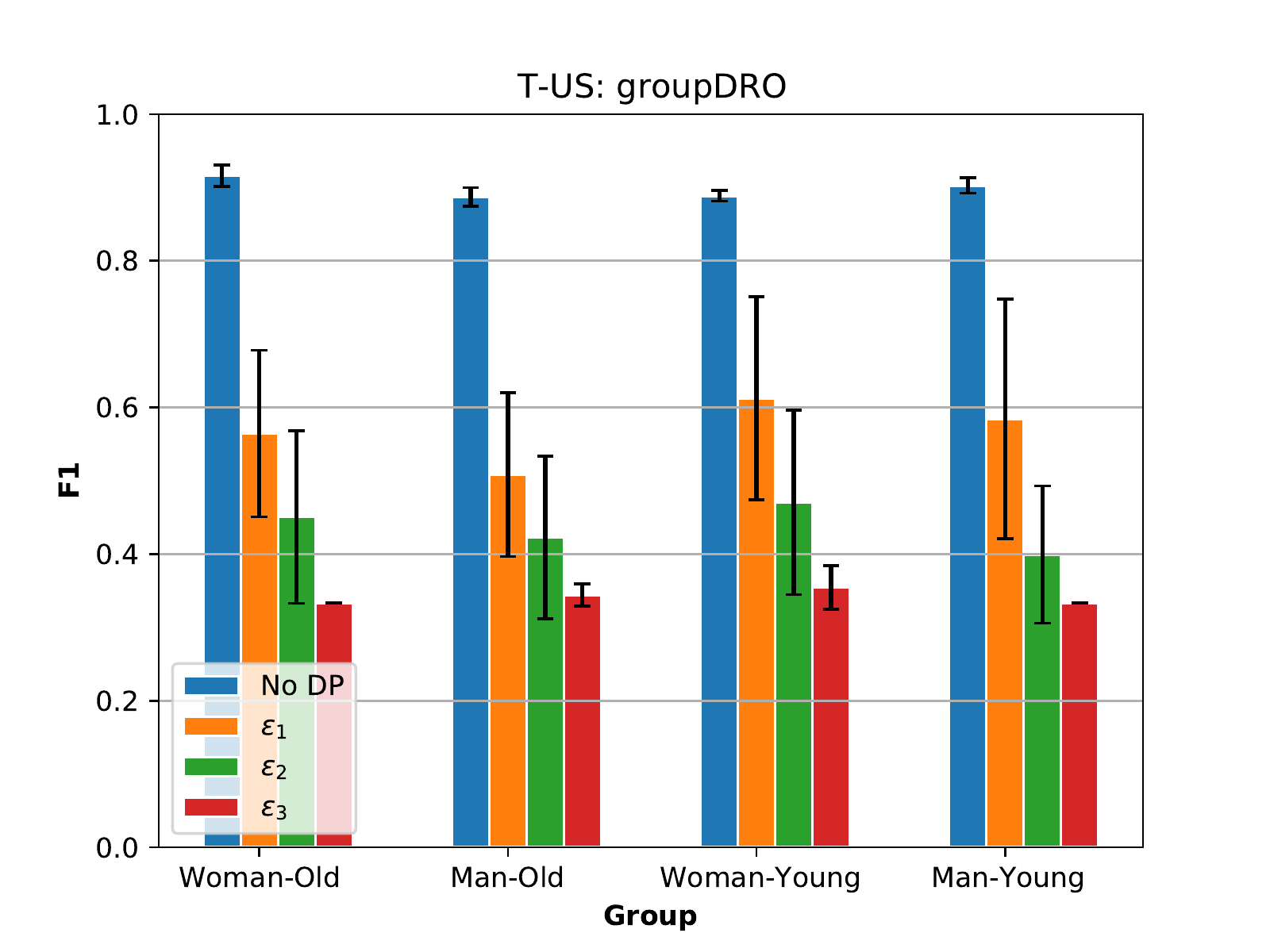}
\end{subfigure}
\caption{Performance of individual groups of increasing levels of $\varepsilon$ for the Trustpilot-US corpus. Error bars show standard deviation over 3 individual seeds.}
\label{fig:group_trust_us}
\end{figure*}

\begin{figure*}[h]
\centering
\begin{subfigure}{.5\textwidth}
    \centering
    \includegraphics[width=\textwidth]{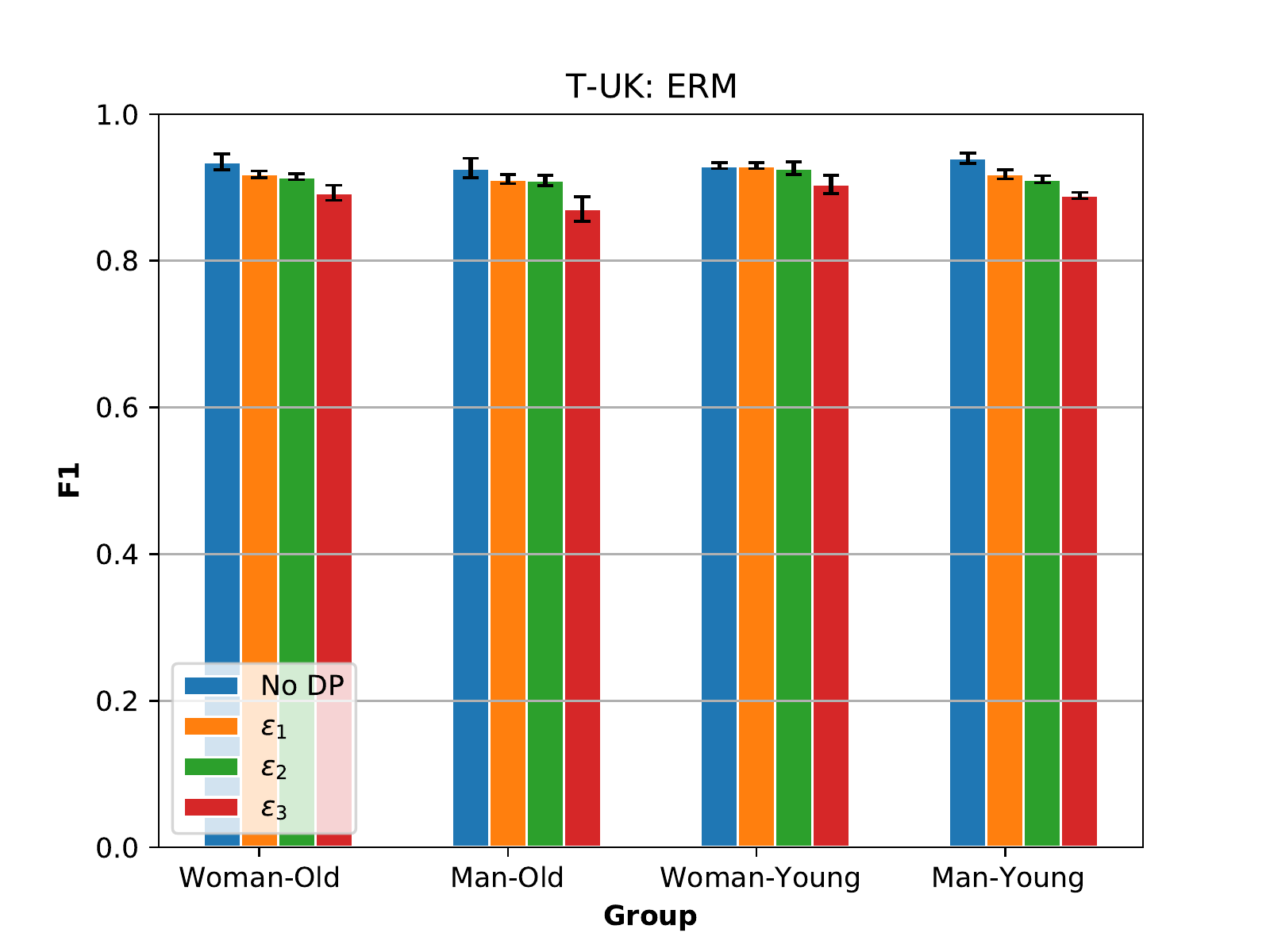}
\end{subfigure}%
\begin{subfigure}{.5\textwidth}
    \centering
    \includegraphics[width=\textwidth]{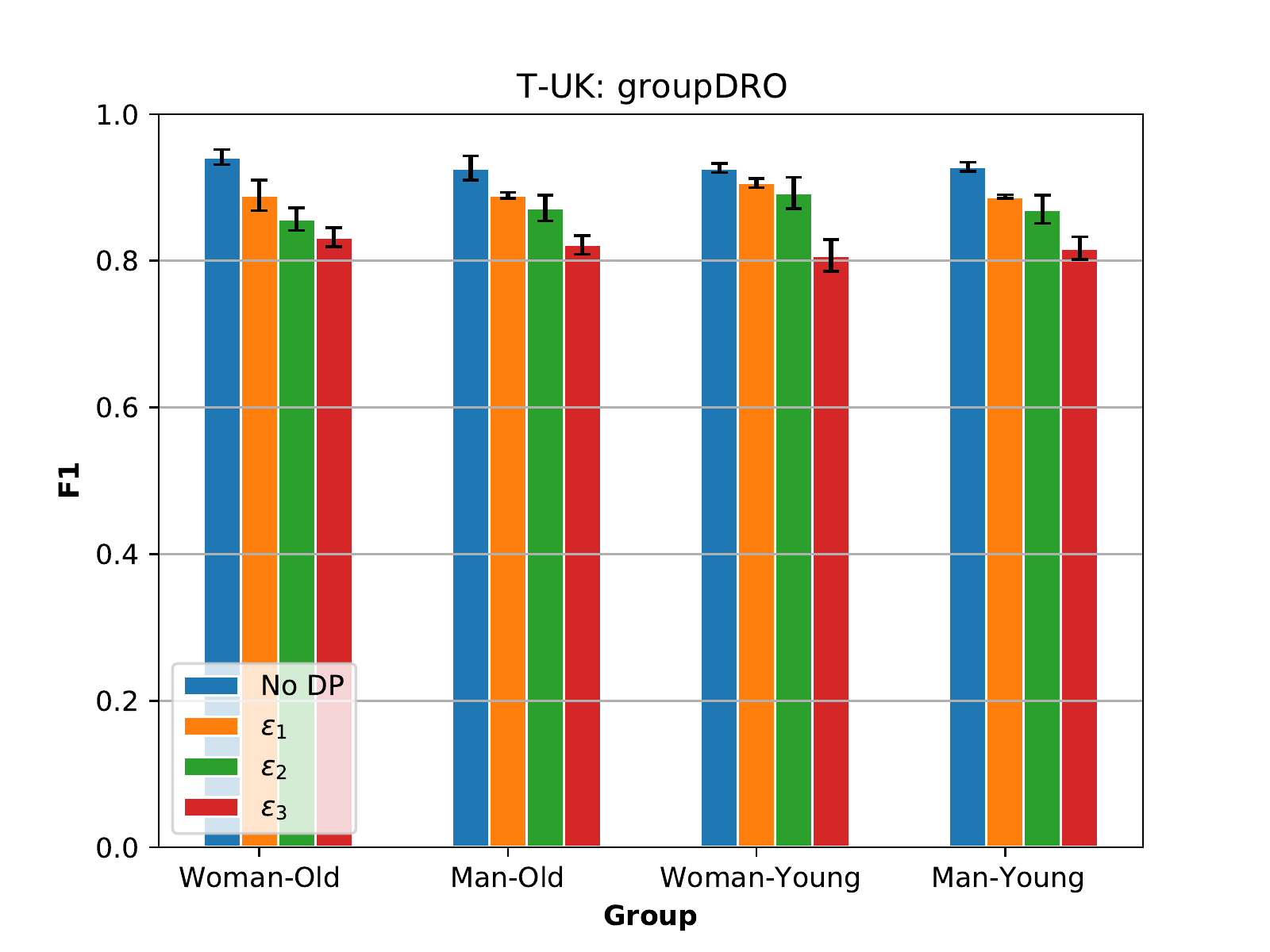}
\end{subfigure}
\caption{Performance of individual groups of increasing levels of $\varepsilon$ for the Trustpilot-UK corpus. Error bars show standard deviation over 3 individual seeds.}
\label{fig:group_trust_uk}
\end{figure*}

\subsection{Experimental Details}
\label{sec:train_details}
This section contains additional details surrounding the experiments described in \S3.
\paragraph{CelebA} We use the same processed version of the CelebA dataset as \citet{sagawa*2020distributionally} and \citet{koh2021wilds}, that is, we use the same train/val/test splits as \citet{Liu2015face} with the \textit{Blond Hair} attribute as the target with the \textit{Male} attribute being the spuriously correlated variable. See group distribution in the training data in Table \ref{tab:dist_celeb}.


\begin{table}[h]
    \scriptsize
    \centering
    \begin{tabular}{|c|c|c|c}
         Non-Blond, Man & Blond, Man & Non-Blond, Woman & Blond, Woman\\
        \hline
        $66874 $ & $1387$ & $71629$ & $22880$ \\
    \end{tabular}
    \caption{Group distribution in the training set of CelebA}
    \label{tab:dist_celeb}
\end{table}

\paragraph{Blog Authorship Corpus} In addition to the preprocessing described in \S3, we split the data into a 60/20/20 train/val/test split (you can find the exact seed that generates the splits in our code). See group distribution in the training data in Table \ref{tab:dist_blog}.
\begin{table}[h]
    \scriptsize
    \centering
    \begin{tabular}{|c|c|c|c|c|}
        Group & Young, Man & Old, Man & Young, Woman & Old, Woman\\
        \hline
        Count & $27222$ & $2295$ & $12750$ & $2435$ \\
    \end{tabular}
    \caption{Group distribution in the training set of Blog Authorship corpus}
    \label{tab:dist_blog}
\end{table}
The Blog Authorship Corpus can be downloaded at: \url{https://www.kaggle.com/rtatman/blog-authorship-corpus}

\paragraph{Earnings Conference Calls} Out of the 559 calls, we only include 535 datapoints that contain self-reported demographic attributes about gender. See Table \ref{tab:dist_ecalls} for group distributions for the training data. The target stock volatility variable is calculated following \cite{kogan2009predicting,qin-yang-2019-say}, defined by:
\begin{equation}
    v_{[t-\tau,t]} = \text{ln}\biggl(\sqrt{\frac{\sum_{i=0}^{\tau}(r_{t-i}-\bar{r})^2}{\tau}}\biggr)
\end{equation}
Here $r_t$ is the return price at day $t$ and $\bar{r}$ the mean of return prices over the period of $t-\tau$ to $t$. We refer to $\tau$ as the temporal volatility window in our experiments. The return price $r_t$ is defined as $r_t = \frac{P_t}{P_{t-1}}-1$ where $P_t$ is the closing price on day $t$.

\begin{table}[h]
    \scriptsize
    \centering
    \begin{tabular}{c|c|c|}
        Group & Man & Woman \\
        \hline
        Count & $333$ & $42$ \\
    \end{tabular}
    \caption{Group distribution in the training set of Earnings Conference Calls}
    \label{tab:dist_ecalls}
\end{table}

\paragraph{Trustpilot}
We only include the datapoints that contains complete demographic attributes, i.e. the gender, age and location, but as with our topic classification experiments, we only study the group that we can define based on age and gender. All attributes are self-reported. For training we divide the reviews into the four resulting groups (\textit{Old-Man}, \textit{Young-Woman}, etc.) and downsample the largest groups to match the size of the smallest group. For validation as well as testing, we withhold 200 samples from each demographic with an even distribution among the ratings (1 to 5).
The review scores are then binarized by grouping positive (4 and 5 stars) and negative (1 and 2 stars) and discarding neutral ones (3 stars). For a similar use of this binarization scheme, see \citet{gupta-etal-2020-effective} and \citet{desai2019}. See the group distributions for the training data in Table \ref{tab:dist_tus} and \ref{tab:dist_tuk} for the US and UK tasks respectively.

\begin{table}[h]
    \centering
    \scriptsize
    \begin{tabular}{c|c|c|c|c|}
        Group & Young, Man & Old, Man & Young, Woman & Old, Woman\\
        \hline
        Count & $7242$ & $7210$ & $7222$ & $7255$ \\
    \end{tabular}
    \caption{Group distribution in the training set of Trustpilot-US}
    \label{tab:dist_tus}
\end{table}
\begin{table}[h]
    \centering
    \scriptsize
    
    \begin{tabular}{c|c|c|c|c|}
        Group & Young, Man & Old, Man & Young, Woman & Old, Woman\\
        \hline
        Count & $18464$ & $18693$ & $18554$ & $18693$ \\
    \end{tabular}
    \caption{Group distribution in the training set of Trustpilot-UK}
    \label{tab:dist_tuk}
\end{table}

\paragraph{BiLSTM}
The BiLSTM model was trained using a Nvidia Tesla K80 GPU. We use a learning rate of $1e^{-2}$ and train using DP-SGD for 30 epochs using a virtual batch size of 32. The average sequence length of the audio embeddings is 159. We set the maximum sequence length to 150 as we did not observe a performance increase for higher values. We run 5 individual seeds for each configuration.

In our differentially private experiments with the BiLSTM (i.e Earnings Conference Calls), we fix the gradient clipping $C$ to $0.8$. By specifying various approximate target levels of $\varepsilon\in\{1,5,10\}$  a corresponding noise multiplier $\sigma$ is computed with the Opacus framework, based on the batch size and number of training epochs. 

\paragraph{DistilBERT}
DistilBERT is a small Transformer model trained by distilling BERT \citep{devlin-etal-2019bert} ({\sf bert-base-uncased}). It has 3/5th of the parameters of {\sf bert-base-uncased}, runs 60\% faster, while preserving over 95\% of the performance of {\sf bert-base-uncased}, as measured on the GLUE language understanding benchmark \citep{wang-etal-2018-glue}.

We finetune DistilBERT on the Trustpilot corpus and Blog Authorship corpus for 20 epochs each, using a batch size of 8, accumulating gradient for a total virtual batch size of 16 using the built in Opcaus functionality. We limit the number of tokens in a sequence to 256 and use a learning rate of $5e^{-4}$ with the AdamW optimizer in addition to a weight decay of $0.01$. Otherwise we use the default parameters defined in the Huggingface Transformers python package (version 4.4.2). 
The models are trained using a single Nvidia TitanRTX GPU and each configuration takes between 5 and 14 hours to run, depending on the size of that dataset and if DP is used or not. We run 3 individual seeds for each configuration.

In our differentially private experiments with DistilBERT (i.e. Blog Authorship and Trustpilot), we fix the gradient clipping $C$ to $1.2$ and by specifying various target levels of $\varepsilon\in\{1,5,10\}$  a corresponding noise multiplier $\sigma$ is computed with the Opacus framework, based on the batch size and number of training epochs. 

\paragraph{Resnet50}
ResNet50 is a variant of the ResNet model \citep{he2015deep}, which has 48 convolution layers along with 1 max pooling and 1 average pooling layer. It has 3.8 x $10^9$ floating points operations.

We finetune our Resnet50 model on the CelebA dataset for 20 epochs using a batch size of 64. We optimize the model using standard stochastic gradient descent (SGD) with a learning rate of $1e^{-3}$, momentum of $0.9$ and no weight decay.
We train our models using a single Nvidia TitanRTX GPU and each configuration takes between 6 and 8 hours to run, depending on if DP is used or not. We run 3 individual seeds for each configuration.

As with the differentially private DistilBERT experiments, we also here fix the gradient clipping $C$ to $1.2$ and by specifying various target levels of $\varepsilon\in\{1,5,10\}$  a corresponding noise multiplier $\sigma$ is computed with the Opacus framework, based on the batch size and number of training epochs. 

\end{document}